\documentclass[10pt,leqno]{amsart}
\usepackage{graphicx}
\baselineskip=16pt

\usepackage{indentfirst,csquotes}

\topmargin= .5cm
\textheight= 20cm
\textwidth= 32cc
\baselineskip=16pt

\usepackage[indentafter]{titlesec}
\titleformat{name=\section}{}{\thetitle.}{0.8em}{\centering\scshape}
\evensidemargin= .9cm
\oddsidemargin= .9cm
\usepackage{amsfonts}
\newtheorem{theorem}{Theorem}

\theoremstyle{definition}
\newtheorem{definition}{Definition}

\usepackage{amssymb,amsthm,amsmath}
\usepackage{xcolor,paralist,hyperref,titlesec,fancyhdr,etoolbox}

\titleformat{\section}[display]{\normalfont\huge\bfseries\centering}{\centering\chaptertitlename\thechapter}{10pt}{\Large}
\titlespacing*{\section}{0pt}{0ex}{0ex}

\hypersetup{ colorlinks=true, linkcolor=black, filecolor=black, urlcolor=black }

\usepackage{lipsum}

\begin{document}
\title{Upper and Lower Bounds on $T_1$ and $T_2$ decision tree model} 
\author{Yousef M. Alhamdan}
\date{\today}

\let\thefootnote\relax

\footnotetext{Umm AlQura University}
\footnote{ymhamdan@uqu.edu.sa}

\begin{abstract}
We study a decision tree model in which one is allowed to query subsets of variables. This model is a generalization of the standard decision tree model. For example, the $\lor-$decision (or $T_1$-decision) model has two queries, one is a bit-query and one is the $\lor$-query with arbitrary variables. We show that a monotone property graph, i.e. nontree graph is lower bounded by $n\log n$ in $T_1$-decision tree model. Also, in a different but stronger model, $T_2$-decision tree model, we show that the majority function and symmetric function can be queried in $\frac{3n}{4}$ and $n$, respectively.
\end{abstract} 
\maketitle

\paragraph{\textbf{Introduction.}}
In the following, we follow the same notation and definition as in [2] and [3]. A decision tree T is a rooted directed binary tree.  Each of its leaves is labeled 0 or 1. Each internal vertex $v$ is labeled with some function $q_v: \{0, 1\} \rightarrow \{0,1\}$.  Each internal node has two outgoing edges, one labeled $1$ and one labeled by $0$.  A computation of T on input $x \in \{0,1\}^n$ is the path from the root to one of the leaves that in each of the internal vertices $v$ follows the edge and has a label equal to the value of $q_v(x)$. The label of the leaf reached by the path is the output of the computation.  The tree T computes the function $f:\{0,1\} \rightarrow \{0,1\}$ iff on each input $x \in \{0,1 \}^n$ the output of T is equal to $f(x)$.

There are different models of decision tree, each different by the types of function $q_v$ that are allowed in the interval vertices.  For any set $Q$ of functions the decision tree complexity of the function f is the minimal depth of a tree, (i.e. the number of edges in the longest path from the root to a leaf) using functions from Q and computing f.  We denote this value by $D_Q(f)$. The standard decision tree model allows to query individual variables in the vertices of the tree. The complexity in this model is simply denoted by $D(f)$.  A graph property is a set of graphs $S$ such that if the set contains some graph G, i.e. $G \subseteq S$, then it contains each isomorphic copy of G, i.e. $\pi (G) \in S$. Our goal in decision tree complexity is to decide whether a graph has the property P or not(e.g. P could be connectedness, Bipartiteness, etc.). Consider two graphs $G_1$ and $G_2$ on vertices set $\{ 1, 2, 3, ..., n \}$. Denote by $E(G_i)$ the edge set of graph $G_i$, where $i=1,2$. A graph property P is monotone increasing if $G_1 \in P$ and $E(G_1) \subseteq E(G_2)$ imply $G_2 \in P$. P is monotone decreasing if $G_2 \in P$ and $E(G_1) \subseteq E(G_2)$ imply $G_1 \in P$. P is monotone if it is either monotone increasing or monotone decreasing. P is called to be trivial if $P = \phi$ or $P$ contains all graphs on n vertices.

\newpage
\paragraph{\textbf{$T_1-$decision Tree Model.}}

\begin{definition}
Let $I \subseteq {X \choose k-1}$ where $X=\{x_1, x_2, ..., x_n\}$ and $|X|=n$.
\end{definition}

Let $u \in I$, then $u'$ is the string bits of u such that $u'(i)=1$ if and only if $x_i \in u$ where $u'=u(1)u(2)...u(n)$ and $1 \leq i \leq n$ and $n$ is the length of the input. Observe that $\forall u \in I, T_k(u')=0$. This means that for all $u \in I$, there exists a path from root to 0-output leaf.

Given an n-vertex graph $G$ as input, associate a boolean variable $x_e$ with each possible pair $e = \{ u, v \}$ of vertices $u \neq v$. Then, each assignment x of $\binom{n}{2}$ boolean values to these variables gives us a graph $G_x$ with the edge-set $E = \{ e | x_e = 1 \}$. 
\begin{definition}
$g(x) = 1$ if and only if $G_x$ is a non-tree.
\end{definition}
This problem is monotone, it is interested that in $\lor$-decision tree model (more generalization and powerful of the standard model), there exists some inputs that needs $\Omega(n \log n)$ queries to find whether it is non-tree or not. It is still not obvious if Aanderaa–Karp–Rosenberg conjecture applies in the $\lor$ model or not.

\begin{theorem}
$n\log(n-2) \leq D_{\lor} (g)$
\end{theorem}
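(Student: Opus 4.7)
The plan is to prove the lower bound by a counting argument that exploits two special features of the $\lor$-decision tree model: the answer to any $\lor$-query is monotone in the input, and consequently the set of inputs consistent with any root-to-leaf path is closed under taking unions of edge sets. The heart of the argument is a \emph{uniqueness lemma}: at any leaf $\ell$ of an optimal decision tree for $g$ that is labelled $0$, at most one spanning-tree input can reach $\ell$.

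To establish the lemma, I would first describe the constraints an input must satisfy in order to reach $\ell$. The path from the root to $\ell$ is a sequence of query--answer pairs of three kinds: bit queries on individual variables, $\lor$-queries answered $0$, and $\lor$-queries answered $1$. Collect from these the set $Z$ of edges forced to be absent (coming from bit queries answered $0$ together with every $\lor$-query answered $0$), the set $P$ of edges forced to be present (from bit queries answered $1$), and a family $S_1,\ldots,S_k$ of edge-sets corresponding to the $\lor$-queries answered $1$. A graph $x$ reaches $\ell$ if and only if its edge set $E(x)$ is disjoint from $Z$, contains $P$, and intersects every $S_j$.

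Now suppose two distinct spanning trees $T_1\neq T_2$ both reach $\ell$, and consider the graph $H$ with edge set $T_1\cup T_2$. I would verify the three consistency conditions for $H$: it avoids $Z$ because each $T_i$ does; it contains $P$ for the same reason; and it meets each $S_j$ because $T_1$ already does. Hence $H$ reaches $\ell$ as well. But $|E(H)|\geq n>n-1$, so $H$ has a cycle, meaning $g(H)=1$, contradicting the fact that $\ell$ is a $0$-leaf. This proves the lemma.

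The theorem then follows by a straightforward leaf-count: Cayley's formula supplies $n^{n-2}$ spanning trees on $n$ labelled vertices, the lemma forces at least this many distinct $0$-leaves, and a binary decision tree of depth $d$ has at most $2^d$ leaves, giving $d\geq(n-2)\log_2 n=\Theta(n\log n)$, which matches the stated bound $n\log(n-2)$ up to routine manipulation. The main obstacle in the plan is the careful verification that the three consistency conditions (especially the ones arising from $\lor$-queries answered $1$) are preserved under the union $T_1\cup T_2$; this is exactly where the monotonicity of the $\lor$ query is indispensable, and it is essentially the only nontrivial step of the proof.
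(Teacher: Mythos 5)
Your proposal is correct and follows essentially the same route as the paper: count the $n^{n-2}$ labelled spanning trees via Cayley's formula, observe that the union of two distinct trees contains a cycle and hence satisfies $g=1$, conclude that distinct trees must reach distinct $0$-leaves, and bound the depth by the logarithm of the leaf count. Your version is in fact more complete than the paper's, since you explicitly verify the closure-under-union of the set of inputs consistent with a $0$-leaf (the step the paper leaves implicit), and you correctly note that the counting actually yields $(n-2)\log_2 n$ rather than the literal expression $n\log(n-2)$ in the statement.
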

\begin{proof}
Let $L$ be the set of all labeled trees. Then the number of labeled trees is $(n)^{n-2}$ by Cayley's formula [4], i.e. $|L|=(n)^{n-2}$.

Now, let $A, B \in L$ such that $A \neq B$, then we know that $g(A')=g(B')=0$. Now, taking the union of $A$ and $B$ will result a cycle, i.e. $(A \cup B)$ has a cycle, therefore $g((A \cup B))=1$. In order to prove this claim,  let $T, T' \in L$ such that $T \neq T'$. Now, $T$ is different from $T'$ by at least one edge. If we take the union of $T$ and $T'$, then the resulted graph must have a cycle. This is because $T$ and $T'$ both are tree and $T \neq T'$, then $(T \cup T')$ must have a cycle. Thus, we have as many as $(n)^{n-2}$ different trees in L and all of them have different path from the root to the 0-output. Thus, the depth of the tree is at least $n\log(n-2)$.
\end{proof}

The previous theorem used an old technique which proved a lower bound for threshold function $T_k$ in $\lor$-decision model by Ben-Asher and Newman [1].
\newline
\newline
\newline
\newline
\newline
\paragraph{\textbf{$T_2-$decision Tree Model}}
The following theorem is an interesting in its own because majority function can be queried using only one kind of queries, i.e. $T_2$. Benasher and Newman [1] show that using only $\lor$ (or $T_1$) queries, then we have at most $\frac{n}{2}$ queries (in fact, in the same paper, they also show that this is optimal). If we add another function to the model as Chistopolskaya and Podolskii did in their paper [2], they show that in $\{ \lor, \land \}$-decision tree model, they show that $n-1$ is optimal for majority function. In this paper, we study majority function in $T_2$-decision tree model, we show that majority function can be queried in at most $\frac{3n}{4}$.

\begin{theorem}
    $D_{T_2^n}(MAJ_n) \leq \frac{3n}{4} $
\end{theorem}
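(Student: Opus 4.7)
The plan is a two-phase matching scheme. First, I match the $n$ variables into $n/2$ disjoint pairs $P_1,\dots,P_{n/2}$ and spend one $T_2$-query on each pair, using $n/2$ queries in total; because $T_2(\{x,y\}) = x\wedge y$, this tells me exactly which pairs are ``full'' (both entries equal to $1$, contributing $2$ to the sum) and which are ``uncertain'' (contributing $y_i\in\{0,1\}$). Writing $k$ for the number of full pairs and $A=n/2-k$ for the number of uncertain ones, the input sum equals $2k+\sum y_i$. If $k>n/4$ this sum already exceeds $n/2$, so $MAJ_n=1$; if $k=0$ the sum is at most $n/2$, so $MAJ_n=0$. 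In either boundary case I stop after $n/2\le 3n/4$ queries.

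In the remaining range $1\le k\le n/4$, the task reduces to deciding the threshold $\sum y_i\ge n/2+1-2k$, i.e.\ that at most $k-1$ of the $y_i$'s equal zero. My plan is to pair the $A$ uncertain pairs into $\lfloor A/2\rfloor$ meta-pairs and issue one $T_2$-query on each four-variable union; because the total sum of such a union is at most $2$, the query returns $y_i\wedge y_j$. If $A$ is odd, one leftover uncertain pair is resolved by a single $T_2$-query that uses a confirmed $1$-variable from a full pair (which exists because $k\ge 1$) as a pivot, recovering $y_L$ directly. From the number $Z$ of meta-pair queries that returned $0$ and the value of $y_L$, the total number of zero $y_i$'s is confined to $[Z+(1-y_L),\ 2Z+(1-y_L)]$; whenever this interval lies entirely above or entirely at-or-below $k-1$, $MAJ_n$ is decided with no further queries.

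The hard step is the ambiguous regime in which the interval straddles $k-1$. I would handle it by applying the same pair-matching idea one level up: pair the $Z$ zero-returning meta-pairs among themselves and query $T_2$ on each union of their variables. Because each such meta-pair contains at most one $1$, that query evaluates to the AND of the second-level indicators $w_j\in\{0,1\}$ marking whether the $j$-th zero-meta-pair contains a $1$; padding with a confirmed $1$-variable as a pivot lets me simulate OR or $T_2$ on the $w_j$'s and thereby realise whichever threshold actually arises. A case analysis over $(Z,y_L)$ shows that this second level of pair-matching costs at most $\lceil(k-1)/2\rceil$ additional queries, and the identity $\lceil A/2\rceil+\lceil(k-1)/2\rceil=n/4$ (for $n$ divisible by $4$) holds in every regime, giving a grand total of at most $n/2+n/4=3n/4$. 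The main obstacle is precisely this bookkeeping: verifying that the second-level pair-matching really does suffice across every combination of $(Z,y_L)$ and every induced threshold on the $w_j$'s, not only in the extreme AND case where the threshold equals $Z$.
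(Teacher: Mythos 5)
Your first two phases are sound: the $n/2$ pair queries compute $x_{2i-1}\wedge x_{2i}$; the meta-pair queries compute $y_i\wedge y_j$ because each uncertain pair holds at most one $1$; the pivot trick is legitimate since $k\ge 1$ guarantees a known $1$; and the budget identity $\lceil A/2\rceil+\lceil(k-1)/2\rceil=n/4$ checks out. (One boundary issue: decide the convention for $MAJ_n$. The paper treats $\sum_i x_i=n/2$ as a majority, under which your shortcut ``$k=0\Rightarrow MAJ_n=0$'' fails when every uncertain pair contains a $1$; under the strict convention it is fine, but you must say which you use.) The genuine gap is the one you flag yourself. After phase two the task has become: given $Z$ disjoint groups of variables, each known to contain at most one $1$, plus a pivot, decide whether at least $m$ of the groups contain a $1$, where $m=2Z+z_0-k+1$ and the ambiguous regime is exactly $1\le m\le Z$, using at most $\lceil(k-1)/2\rceil=Z-\lfloor(m-z_0)/2\rfloor$ further $T_2$-queries. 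One additional round of pairing, as literally described, does not close this: for $Z=4$, $m=3$ the two pairwise ANDs may return $(1,0)$ and you still need an OR on the failed pair; in general the scheme must recurse (each failed pair merges into a new group with at most one $1$, each successful pair lowers the threshold by $2$), and you need an induction showing the query bound survives every branch. That induction is the real content of the theorem and is absent; ``a case analysis over $(Z,y_L)$ shows'' is an assertion, not a proof. The lemma is true and the induction is short (AND $=1$: recurse on $Z-2$ groups with threshold $m-2$; AND $=0$: recurse on $Z-1$ groups with the same threshold; bases $m\le 2$ cost one query), so the gap is fillable, but as written the proof stops at its crux.

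For comparison, the paper's argument is much blunter: two fixed rounds of blocking (pairs, then fours assembled from the $0$-answering pairs) and a count of how many $0$s the input must contain when fewer than $n/4$ blocks ever answer $1$. It uses no pivots and no adaptivity beyond the regrouping. Its correctness, however, rests entirely on that counting step for inputs in which many $0$-pairs each hide a single $1$ --- precisely the configurations your ambiguous regime isolates --- and the paper's tally of zeros counts the same zeros at both levels, so your instinct that this is where the difficulty lives is well placed. Your adaptive route is heavier, but once the missing lemma is actually proved it gives a self-contained and checkable argument rather than a two-line count.
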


\begin{proof}
    Split all variables into blocks of size 2, i.e. $(x_1, x_2), (x_3, x_4), ..., (x_{n-1}, x_n)$. There are two levels of queries:

    \begin{itemize}
        \item [1.] Query each block by $T_2$, let $L_0^1$ and $L_1^1$ be the number of blocks that answers 0 and 1 in level 1, respectively. If $L_1^1 \geq \frac{n}{4}$, then answer 1. Otherwise, go to level 2.
        \item [2. ] Group every two blocks of size two that answer 0 into one bigger block of size 4 (e.g. let's say $T_2(x_1, x_2) = 0$ and $T_2(x_3, x_4) = 0$, then we will group $x_1, x_2, x_3, x_4$ into one block.). Query each block by $T_2$ and let $L_0^2$ and $L_1^2$ be the number of blocks that answers 0 and 1 in level 2, respectively. If $L_1^1 + L_1^2 \geq \frac{n}{4}$, then answer 1, otherwise answer 0.
    \end{itemize}

    Now we shall prove why these two levels are true. For level 1 and 2, observe that if $L_1^1 \geq \frac{n}{4}$ or $L_1^1 + L_1^2 \geq \frac{n}{4}$, then each block has exactly two 1s, therefore there are $2*\frac{n}{4} = \frac{n}{2}$ 1s. Note that we don't count bits twice, i.e. the input bits in $L_1^1$ have no common bits in input bits of $L_1^2$; this is because at level 2, we only query pairs that answers 0. Otherwise at level 1, $L_0^1 \geq \frac{n+4}{4}$ and at level 2, $L_0^2 \geq \lfloor \frac{\frac{n}{2}+1}{4} \rfloor = \lfloor \frac{n+2}{8} \rfloor$. In this case, the number of 0s in the input will be at least $(\frac{n+4}{4})*1 + (\frac{n+2}{8} )*3 = \frac{5n+14}{8} \geq \frac{n}{2}$  because at level 1 and 2, the number of 0s in each block that answers 0 at least 1 and 3, respectively. Now, observe that the total number of queries are $\frac{n}{2} + \frac{n}{4} = \frac{3n}{4}$.

\end{proof}

The following theorem shows that for any symmetric function e.g. $T_k$ for $k\geq 2$, $PARITY$ can be queries in at most n queries. 
\begin{theorem}\
    $D_{T_2^n}(SYM_n) \leq n $
\end{theorem}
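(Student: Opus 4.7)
The approach I would take is direct: simply query each of the $n$ input bits individually. Two observations make this work. First, the $T_2$-decision tree model generalizes the standard decision tree model, and, by analogy with the authors' own description of the $T_1$-model (``has two queries, one is a bit-query and one is the $\lor$-query''), it admits bit queries $q_v(x)=x_i$ alongside the $T_2$-queries on arbitrary subsets. Second, a symmetric function $f$ is completely determined by the Hamming weight $|x|$, and hence by $x$ itself. So the protocol is: query $x_1$, then $x_2$, \ldots, then $x_n$; after exactly $n$ queries the whole input is known and $f(x)$ is output without further queries.

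Before committing to this plan I would briefly ask whether $T_2$-queries let us do better. The natural attempt is to pair variables and query $T_2(x_{2i-1},x_{2i})$, saving a bit query whenever the pair answers $1$ (both bits are $1$). However, on any input of Hamming weight at most $1$ every $T_2$ query returns $0$ no matter how the subset is chosen, so such a query conveys no information beyond ``weight at most $1$ on the queried subset''. In particular, no sequence of $T_2$-queries alone can distinguish $|x|=0$ from $|x|=1$, since the two inputs answer identically on every subset. Consequently, for a function like $PARITY_n$ (whose values on weight $0$ and weight $1$ differ), a standard adversary argument keeping $|x|\le 1$ as long as possible forces all $n$ bit queries to be made. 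Thus $n$ is the right target and cannot, in general, be improved via additional $T_2$-queries.

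The only real conceptual obstacle is settling the convention that bit queries count as legal queries in the $T_2$-model; the excerpt's treatment of the $T_1$-model strongly suggests this is intended, and the companion theorem for $MAJ_n$ is consistent with reading $T_2^n$ as extending, rather than replacing, the standard model. Once this is granted, the proof of the theorem reduces to the one-line protocol ``query every $x_i$ and then evaluate $f$ on the resulting input,'' and no clever combination of $T_2$-queries is needed.
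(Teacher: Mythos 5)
Your protocol hinges on the convention that individual bit queries $x_i$ are available in the $T_2^n$ model, and you correctly identify this as the crux. Unfortunately the paper's framing goes the other way: immediately before the $MAJ_n$ theorem it stresses that these results use \emph{only one kind of queries, i.e.\ $T_2$}, and both of the paper's proofs indeed use nothing but $T_2$ queries. Under that reading your one-line protocol is not legal, and the theorem is precisely the nontrivial claim that threshold-$2$ queries alone suffice. The paper's actual argument partitions the variables into blocks of four and shows that four $T_2$ queries per block suffice to determine the exact number of ones in that block: first $T_2(x_1,x_2)$ and $T_2(x_3,x_4)$, then, depending on the answers, either $T_2(x_1,x_2,x_3,x_4)$ followed by $T_2(x_1,x_2,x_3,x_4,x_1,x_2,x_3,x_4)$, or $T_2(x_1,x_3)$ and $T_2(x_2,x_4)$. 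Summing the block counts gives the Hamming weight $|x|$, hence the value of any symmetric function, in $n$ queries total.

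Your second paragraph contains a further error worth flagging because noticing it would have pointed you to a repair: you assert that no sequence of $T_2$ queries can distinguish $|x|=0$ from $|x|=1$. This is false in the paper's model, which permits querying multisets of variables: $T_2(x_i,x_i)=x_i$ recovers a single bit, and $T_2(x_1,x_1,\dots,x_n,x_n)$ separates weight $0$ from weight $\ge 1$ in one query (this is exactly the role of the query $y_4$ in the paper's proof). In particular, your proposal could be salvaged within the $T_2$-only model by replacing each bit query $x_i$ with $T_2(x_i,x_i)$, again giving $n$ queries; but as written, the proof both relies on an unlicensed query type and rests its supporting adversary argument on a claim that the paper's own construction refutes.
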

\begin{proof}
    Split all variables into blocks of size 4, let us denote one of the blocks by $x_1, x_2, x_3, x_4$. Now, each block will have at most 4 queries. Let $y_1 = T_2(x_1, x_2)$, $y_2 = T_2(x_3, x_4)$, $y_3 = T_2(x_1, x_2, x_3, x_4)$, $y_4 = T_2(x_1, x_2, x_3, x_4, x_1, x_2, x_3, x_4)$, $y_5 = T_2(x_1, x_3)$ and $y_6 = T_2(x_2, x_4)$. Then, We have the following cases: Query $y_1$ and $y_2$
\begin{itemize}
    
    \item [case 1:] if both $y_1$ and $y_2$ equal zero, then this block has either two, one or zero ones. In order to differentiate between these cases, query $y_3$. If $y_3=1$, then there is exactly two ones. Otherwise, query $y_4$, if $y_4=0$, then all variables in this block equal zero. Otherwise, there is only one variable equal one.
    
    \item [case 2:] if $y_1 != y_2$ (i.e. one is zero, the other is one), then we know that there is either two ones or three ones. So, we query $y_5$ and $y_6$, if all equal zero, then there are exactly two ones, Otherwise there are exactly three ones.
\end{itemize}
\end{proof}



\begin{thebibliography}{99}

\bibitem{1} Yosi Benasher and Ilan Newman. “Decision trees with boolean threshold queries”. In: Journal
of Computer and System Sciences 51.3 (1995),
pp. 495–502. 

\bibitem{2} Anastasiya Chistopolskaya and Vladimir V
Podolskii. “On the Decision Tree Complexity of
Threshold Functions”. In: Theory of Computing
Systems (2022), pp. 1–25.

\bibitem{3} Suixiang Gao and Guohui Lin. “Decision tree
complexity of graph properties with dimension
at most 5”. In: Journal of Computer Science and
Technology 15 (2000), pp. 416–422.

\bibitem{4} J. W. Moon. “Various Proofs of Cayley’s For-
mula for Counting Trees”. In: A seminar on
Graph Theory (F. Harary, Ed.) (1967), pp. 70–
78.

\end{thebibliography}
\end{document}